\documentclass{scrartcl}
\KOMAoptions{paper=letter}

\usepackage{group}
\usepackage{enumitem}
\usepackage{bm}

\theoremstyle{definition}

\theoremstyle{plain}

\newcommand{\fnash}{f^{\mathrm{EQ}}}

\title{Equilibrium Play Without Mutual Knowledge of Rationality}

\author{Florian Brandl\\University of Bonn%
\and
Felix Brandt\\
Technical University of Munich%
}

\begin{document}

\maketitle

\begin{abstract}
Equilibrium play in two-player zero-sum games is usually justified via epistemic assumptions, such as mutual knowledge of rationality and beliefs, that go far beyond the rationality of the players. We propose a justification that dispenses with these assumptions. To this end, we consider solution concepts that assign to every subgame of a given game a set of plausible actions for each player, and we impose two conditions. Rationality requires that the plausible sets are supports of undominated strategies or, equivalently, that all plausible actions are best responses to a common belief about the opponent. Inheritance requires that plausible actions remain plausible when implausible actions are discarded. In two-player games, the two conditions characterize the solution concepts that consistently select supports of Nash equilibria. Zero-sum payoffs ensure that Nash equilibria---and hence the selection---are generically unique. Equilibrium play thus emerges from individual rationality and the mutual understanding that plausibility judgments persist when implausible actions are discarded.
\end{abstract}

\section{Introduction}

Zero-sum games occupy a special place in game theory.
\Citet{vNeu28a} proved that every finite two-player zero-sum game admits maximin strategies that are interchangeable and yield unique payoffs---nowhere is the case for equilibrium play more compelling than in this class of games.
Yet, the classic justification of maximin play is decision-theoretic rather than strategic: a maximin strategy guarantees the value against every conceivable behavior of the opponent.
Why a player who faces a possibly naive opponent should nevertheless randomize according to an equilibrium strategy is much less clear, and has been debated at least since \citeauthor{Ells56a}'s (\citeyear{Ells56a}) ``reluctant duelist'': exploiting a flawed opponent seems to require abandoning equilibrium play.
Epistemic analyses answer that equilibrium behavior is warranted when the players' rationality and their beliefs are mutually known \citep{AuBr95a}---assumptions that go far beyond rationality of the players themselves.

This paper offers a justification of equilibrium play in zero-sum games that dispenses with mutual knowledge of rationality.
We consider solution concepts that assign to every subgame of a given game---every pair of sets of available actions---a set of plausible actions for each player, and we impose two conditions.
Rationality requires the plausible sets to be supports of undominated strategies.
Since a dominated strategy cannot be a best response to any belief about the opponent, rationality concerns each player in isolation and involves no assumption about what a player knows about the other player.
Inheritance requires that the solution be inherited by any reduced game that is obtained by discarding actions outside the solution: plausible actions remain plausible when implausible actions are removed.
In the intended interpretation, each player is rational, and it is mutually understood that plausibility judgments persist when implausible actions are discarded.

Our main result shows that for generic zero-sum games, these two conditions characterize equilibrium play: a solution concept satisfies inheritance and rationality if and only if it assigns to every subgame the supports of its unique Nash equilibrium.
Optimal play thus emerges from minimal rationality and inheritance alone.
The result also delimits how far the exploitation of a non-equilibrium opponent can go: a player may shade probabilities within the equilibrium support in response to anticipated mistakes, but the set of actions she can plausibly use is pinned down exactly.

Neither rationality nor inheritance refers to the zero-sum structure of the game, and we derive our main result from a more general analysis.
\Cref{sec:twoplayer} deals with arbitrary two-player games: if all Nash equilibria of all subgames are quasi-strict---a property that holds for generic payoff matrices---the two conditions characterize the solution concepts that consistently select supports of Nash equilibria, and the remaining multiplicity, which reflects the classical coordination problem, vanishes when Nash equilibria are unique.
\Cref{sec:symmetric} concerns symmetric games, for which it is natural to restrict attention to symmetric subgames, so that the one-sided reductions underlying the general analysis cannot be used.
The characterization nevertheless persists, based on the fact---derived from the index theory of Nash equilibria---that unique equilibria of symmetric games have supports of odd size.
\Cref{sec:zerosum} collects the consequences for zero-sum games: uniqueness of Nash equilibria is generic in this class, which yields our main result, and tournament games, whose Nash equilibria are always unique, inherit the symmetric characterization, implying that no strict refinement of the bipartisan set satisfies inheritance.

Our result is best contrasted with rationalizability \citep{Bern84a,Pear84a}.
Rationalizability rests on a notion of rationality that applies to each action in isolation: an action is justifiable if it is a best response to some belief about the opponent, and different actions may be justified via different beliefs.
Our rationality condition applies to sets of actions instead: a set of actions is the support of an undominated strategy if and only if all of its elements are best responses to one and the same belief (\Cref{lem:support-beliefs}).
All plausible actions thus have to be rationalized by a common belief.
In its most transparent form, this equivalence is a theorem of the alternative: either some mixed strategy is dominated, or a single belief about the opponent equalizes---and thereby simultaneously rationalizes---all of a player's actions.
This strengthens the classical observation that an action is undominated if and only if it is a best response to some belief \citep[Lemma~3 of][see also Lemma~60.1 of \citealp{OsRu94a}]{Pear84a} from one belief per action to one belief for all actions.
The interactive assumptions can be contrasted as well.
Pearce and Bernheim conclude from common knowledge of rationality that players choose rationalizable actions, which, in two-player games, are exactly the actions surviving the iterated elimination of dominated actions.
Inheritance can take over the role of common knowledge of rationality: every solution concept satisfying inheritance and action-wise rationality selects rationalizable actions only, and the rationalizable sets themselves constitute the largest such solution concept.
Strengthening rationality from one belief per action to one belief for all plausible actions then sharpens the prediction from rationalizability all the way down to Nash equilibrium.

Neither approach requires active randomization by the players---a concern that has classically motivated the purification of mixed equilibria \citep{Hars73b}.
For rationalizability, mixed strategies enter only through beliefs.
For our result, rationality can be stated purely in terms of beliefs, and the conclusion admits the familiar interpretation of equilibrium in terms of conjectures: each player holds a single belief about the opponent and plays some pure best response to it.
Since only supports are pinned down, no particular randomization is ever prescribed.
The randomness instead resides in the beliefs, and here the two results differ: rationalizability justifies each action by its own, generically robust, belief, whereas our common belief must equalize all plausible actions---a knife-edge property characteristic of equilibrium conjectures.
The sharper prediction thus comes at the price of knife-edge beliefs, which our theorem derives from inheritance rather than assuming them to be mutually known \citep[cf.][]{AuBr95a}.

Alternative approaches to characterizing equilibrium play are based on notions of consistency across games \citep[see, e.g.,][]{PeTi96a,NPRV96a,DNRT01a,BrBr17c,BrBr23a}. 
In contrast to these methods, our result has the advantage of being applicable to a single game and its subgames without needing to reference additional hypothetical games.%
Set-valued solution concepts for zero-sum games go back to \citet{Shap64a}; prominent examples for general games include sets that are closed under rational behavior \citep{BaWe91a}.
The support of equilibrium play, our $\fnash$, is known as the essential set for symmetric zero-sum games \citep{DuLa99a} and as the bipartisan set in tournament games \citep{LLL93a}.
Conditions in the spirit of inheritance have a long history in the theory of choice consistency: inheritance weakens Chernoff's contraction condition \citep{Cher54a}, while strengthening inheritance by the reverse inclusion yields the strong superset property \citep{Bord79a}, known as the outcast condition in choice theory \citep{AiMa81a}.

\section{The Model}

Let $m,n\in\mathbb N$ be positive integers and $[m] = \{1,\dots,m\}$.
For $x,y\in\mathbb R^k$, $x \le y$ if $x_i \le y_i$ for all $i\in[k]$, $x < y$ if $x\le y$ and $x_i < y_i$ for some $i\in[k]$, and $x \ll y$ if $x_i < y_i$ for all $i\in[k]$.
Moreover, $\bm 1$ denotes the all-ones vector of appropriate dimension and $\supp(x) = \{i\in[k]\colon x_i\neq 0\}$ the support of $x\in\mathbb R^k$.
Let $\Delta(m) = \{p\in\mathbb R_+^m\colon \sum_{i\in [m]} p_i = 1\}$ be the set of strategies over $[m]$ and, for nonempty $S\subseteq [m]$, $\Delta(S) = \{p\in\Delta(m)\colon \supp(p)\subseteq S\}$ the set of strategies with support in $S$.
A two-player game is a pair of matrices $A,B\in\mathbb R^{m\times n}$, where $A_{ij}$ and $B_{ij}$ are the payoffs of the row player and the column player for the action profile $(i,j)\in[m]\times[n]$; zero-sum games correspond to $B = -A$.
For nonempty $S\subseteq[m]$ and $T\subseteq[n]$, the subgame $(A,B)_{S,T} = (A_{S,T},B_{S,T})$ of $(A,B)$ is the game in which the row player's actions are restricted to $S$ and the column player's actions are restricted to $T$; in particular, $(A,B)_{[m],[n]} = (A,B)$.
We identify the strategies of the two players in $(A,B)_{S,T}$ with $\Delta(S)$ and $\Delta(T)$.
A Nash equilibrium of $(A,B)_{S,T}$ is then a pair $(p,q)\in\Delta(S)\times\Delta(T)$ such that ${p'}^tAq \le p^tAq$ and $p^tBq' \le p^tBq$ for all $p'\in\Delta(S)$ and $q'\in\Delta(T)$.
In products involving the matrices $A_{S,T}$ and $B_{S,T}$, strategies are restricted to the relevant coordinates, so that, for $p\in\Delta(S)$ and $q\in\Delta(T)$, $p^tA_{S,T} = ((p^tA)_j)_{j\in T}$ and $B_{S,T}\,q = ((Bq)_i)_{i\in S}$.

A solution concept for a game $(A,B)$ returns a subset of row player actions and a subset of column player actions for each subgame: it is a function $f$ such that, for all nonempty $S\subseteq[m]$ and $T\subseteq[n]$, $f((A,B)_{S,T}) = (S',T')$ for some nonempty $S'\subseteq S$ and $T'\subseteq T$.\footnote{Formally, $f$ is a function of the pair $(S,T)$, since distinct pairs can induce identical matrices.}
For such pairs of sets, we write $(S',T')\subseteq (S,T)$ if $S'\subseteq S$ and $T'\subseteq T$.
If every subgame of $(A,B)$ has a unique Nash equilibrium, one solution concept is the \emph{Nash equilibrium support} $\fnash$, which returns the pair of supports of the unique Nash equilibrium of every subgame.

A solution concept $f$ for $(A,B)$ satisfies inheritance if for all nonempty $S\subseteq[m]$ and $T\subseteq[n]$ and all nonempty $S'\subseteq S$ and $T'\subseteq T$,
\begin{align}
    f((A,B)_{S,T}) \subseteq (S',T') \text{ implies } f((A,B)_{S,T}) \subseteq f((A,B)_{S',T'}). \tag{inheritance}
\end{align}
That is, when actions outside the solution of a subgame are discarded, the reduced subgame inherits the solution.
Observe that if $f((A,B)_{S,T}) = (S',T')$, then inheritance implies $f((A,B)_{S',T'}) = (S',T')$, since $f((A,B)_{S,T}) \subseteq f((A,B)_{S',T'}) \subseteq (S',T') = f((A,B)_{S,T})$.
Dominance refers to a player's own payoffs: a strategy $p\in \Delta(S)$ of the row player is dominated in $(A,B)_{S,T}$ if there exists a strategy $\tilde p\in\Delta(S)$ with $p^t A_{S,T} \ll \tilde p^t A_{S,T}$; analogously, a strategy $q\in\Delta(T)$ of the column player is dominated in $(A,B)_{S,T}$ if there exists a strategy $\tilde q\in\Delta(T)$ with $B_{S,T}\,q \ll B_{S,T}\,\tilde q$.
A solution concept $f$ for $(A,B)$ satisfies rationality if for all nonempty $S\subseteq[m]$ and $T\subseteq[n]$,
\begin{align}
    \begin{aligned}
        &f((A,B)_{S,T}) = (S',T') \text{ implies } S' = \supp(p) \text{ and } T' = \supp(q) \text{ for some}\\
    &\text{strategies } p\in\Delta(S) \text{ and } q \in\Delta(T) \text{ that are undominated in } (A,B)_{S,T}.
    \end{aligned}
    \tag{rationality}
\end{align}

As defined, rationality requires the components of $f((A,B)_{S,T})$ to be the supports of \emph{some} undominated strategies.
This existential formulation is equivalent to the universal formulation, in which \emph{every} strategy whose support is contained in a component of $f((A,B)_{S,T})$ is required to be undominated in $(A,B)_{S,T}$.
The universal formulation clearly implies the existential one.
For the converse, let $f((A,B)_{S,T}) = (S',T')$ and assume that some $p\in\Delta(S')$ is dominated in $(A,B)_{S,T}$ by $\tilde p\in\Delta(S)$.
Then, every $p'\in\Delta(S)$ with $\supp(p') = S'$ is dominated in $(A,B)_{S,T}$ by $p' + \epsilon(\tilde p - p)$ for sufficiently small $\epsilon > 0$: this is a strategy in $\Delta(S)$ because $\supp(p)\subseteq S' = \supp(p')$, and it dominates $p'$ because $(p' + \epsilon(\tilde p - p))^tA_{S,T} = {p'}^tA_{S,T} + \epsilon\,(\tilde p^tA_{S,T} - p^tA_{S,T}) \gg {p'}^tA_{S,T}$.
Hence, $S'$ is not the support of any undominated strategy, which contradicts the existential formulation.
The same argument applies to the column player.
In the following, we use both formulations of rationality interchangeably.

\section{Two-Player Games}\label{sec:twoplayer}

Our results rely on a correspondence between dominance and beliefs, which substantiates the interpretation of rationality given in the introduction: the components of the solution are exactly the sets of actions that are best responses to a common belief.
Here, a strategy $p\in\Delta(S)$ is a best response to a belief $q\in\Delta(T)$ in $(A,B)_{S,T}$ if ${p'}^tAq \le p^tAq$ for all $p'\in\Delta(S)$, and an action $i\in S$ is a best response to $q$ in $(A,B)_{S,T}$ if $(Aq)_i = \max_{i'\in S}\, (Aq)_{i'}$.
Analogously, an action $j\in T$ of the column player is a best response to a belief $p\in\Delta(S)$ in $(A,B)_{S,T}$ if $(p^tB)_j = \max_{j'\in T}\, (p^tB)_{j'}$.

\begin{lemma}\label{lem:support-beliefs}
    Let $S\subseteq[m]$, $T\subseteq[n]$, and $S'\subseteq S$ be nonempty.
    Then, $S'$ is the support of a strategy that is undominated in $(A,B)_{S,T}$ if and only if there exists a belief $q\in\Delta(T)$ such that every action in $S'$ is a best response to $q$ in $(A,B)_{S,T}$.
    The analogous statement holds for the column player.
\end{lemma}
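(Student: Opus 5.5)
The plan is to prove both directions for the row player only; the column player case is symmetric, obtained by transposing and replacing $A$ by $B$. The easy direction is "if". Suppose every action in $S'$ is a best response to a belief $q\in\Delta(T)$. I would try to show that the uniform strategy $p$ on $S'$ (or any $p$ with $\supp(p)=S'$) is undominated. If some $\tilde p\in\Delta(S)$ satisfied $p^tA_{S,T}\ll\tilde p^tA_{S,T}$, then multiplying by $q\ge 0$ with $\sum_j q_j=1$ gives $p^tAq<\tilde p^tAq$. But $p^tAq=\max_{i\in S}(Aq)_i\ge\tilde p^tAq$, because $p$ puts all its weight on best responses. This is a contradiction. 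Strict inequality survives here because $\ll$ is componentwise strict and $q$ is a probability vector, so at least one coordinate carries positive weight.

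For the "only if" direction, fix $p\in\Delta(S)$ with $\supp(p)=S'$ that is undominated in $(A,B)_{S,T}$. By the universal formulation established above, every strategy supported in $S'$ is then undominated; in fact only $p$ itself is needed. Undominatedness of $p$ says that the system
\[
\text{there is } \tilde p\in\Delta(S) \text{ with } (\tilde p-p)^tA_{S,T}\gg 0
\]
has no solution. I would set this up as a linear-programming or theorem-of-the-alternative statement. Consider the zero-sum game with payoff matrix $M=A_{S,T}-\bm 1\,(p^tA_{S,T})$ for the row player, whose rows are indexed by $S$ and columns by $T$. Undominatedness says exactly that the row player has no mixed strategy $\tilde p$ with $\tilde p^tM\gg 0$. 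By the minimax theorem \citep{vNeu28a}, the value of $M$ is therefore at most $0$: otherwise a maximin $\tilde p$ would guarantee a strictly positive payoff against every column and would dominate $p$. The value is also at least $0$, since playing $\tilde p=p$ yields $p^tM=0$. So the value is $0$, and there is a minimax $q\in\Delta(T)$ with $Mq\le 0$, that is, $(Aq)_i\le p^tAq$ for all $i\in S$. Hence $p^tAq=\max_{i\in S}(Aq)_i$.

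Since $p$ is a convex combination of the rows in $S'$, each of which has $(Aq)_i\le p^tAq$, and the $p$-weighted average equals $p^tAq$, every $i\in S'=\supp(p)$ must satisfy $(Aq)_i=p^tAq$. So every action in $S'$ is a best response to $q$, as required.

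The main obstacle is the alternative step: turning "no strictly dominating $\tilde p$" into the existence of $q$ with $Mq\le 0$. I plan to use von Neumann's minimax theorem, which the introduction already invokes. Gordan's or Ville's alternative would serve equally well, applied to the strict system $\tilde p^tM\gg 0$ with $\tilde p\ge 0$, $\tilde p\ne 0$. The care needed is only in checking that value $>0$ really produces a dominating strategy in $\Delta(S)$, which holds because the maximin strategy lies in $\Delta(S)$.
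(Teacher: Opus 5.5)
Your proof is correct, and its structure matches the paper's: the ``if'' direction (integrating the strict domination inequality against $q$) and the final averaging step over $\supp(p)=S'$ are the same. The difference is the duality tool used to produce the belief $q$.

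The paper separates the convex set $\{{p'}^tA_{S,T}\colon p'\in\Delta(S)\}-p^tA_{S,T}$ from the open positive orthant. It must then argue that the normal vector $w$ is nonnegative and nonzero and that the offset $c$ is at most $0$ before normalizing $w$ into a belief.

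You instead show that the auxiliary zero-sum game $M=A_{S,T}-\bm 1\,p^tA_{S,T}$ has value exactly $0$. The value is at most $0$ because a maximin strategy guaranteeing a positive amount would dominate $p$, and at least $0$ because $p^tM=0$. You then take $q$ to be a minimax strategy, so $Mq\le 0$ says precisely that $p$ is a best response to $q$. Nonnegativity and normalization of $q$ come for free, since $q$ ranges over $\Delta(T)$.

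This is the familiar minimax argument for the one-action version of the lemma, run with the mixed strategy $p$ in place of a pure action. It is slightly shorter and stays in game-theoretic language, but it relies on von Neumann's theorem, which is itself usually proved by separation or LP duality. The paper's route is more elementary and self-contained. In substance the two arguments use the same duality, packaged differently.
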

\begin{proof}
    First, let $q\in\Delta(T)$ such that every action in $S'$ is a best response to $q$, and let $p\in\Delta(S)$ with $\supp(p) = S'$.
    Then, $p^tAq = \sum_{i\in S'} p_i\, (Aq)_i = \max_{i'\in S}\, (Aq)_{i'}$, so $p$ is a best response to $q$.
    If a strategy $\tilde p\in\Delta(S)$ dominated $p$, then $\tilde p^tA_{S,T}\gg p^tA_{S,T}$ and hence $\tilde p^tAq > p^tAq$, contradicting that $p$ is a best response to $q$.
    Hence, $p$ is undominated in $(A,B)_{S,T}$.

    Second, let $p\in\Delta(S)$ with $\supp(p) = S'$ be undominated in $(A,B)_{S,T}$, and let $X = \{{p'}^tA_{S,T}\colon p'\in\Delta(S)\}\subseteq\mathbb R^T$.
    Because $p$ is undominated, the convex set $X - p^tA_{S,T}$ is disjoint from the open convex set $\{z\in\mathbb R^T\colon z\gg 0\}$.
    The separating hyperplane theorem yields $w\in\mathbb R^T$ with $w\neq 0$ and $c\in\mathbb R$ such that $w^t(x - p^tA_{S,T})\le c\le w^tz$ for all $x\in X$ and $z\gg 0$ \citep[Section~2.5.1]{BoVa04a}.
    Because $w^tz$ is bounded below on $\{z\in\mathbb R^T\colon z\gg 0\}$, $w\ge 0$, and hence $c \le \inf\,\{w^tz\colon z \gg 0\} = 0$.
    Thus, the belief $q\in\Delta(T)$ with $q_j = w_j/(w^t\bm 1)$ for all $j\in T$ satisfies ${p'}^tAq\le p^tAq$ for all $p'\in\Delta(S)$, i.e., $p$ is a best response to $q$.
    In particular, $p^tAq = \max_{i'\in S}\,(Aq)_{i'}$ and, because $p^tAq = \sum_{i\in S'} p_i\,(Aq)_i$ is an average of the $(Aq)_i$ with $i\in S'$, every action in $S'$ is a best response to $q$.
\end{proof}

In its most transparent form, \Cref{lem:support-beliefs} is a theorem of the alternative.
For $S' = S = [m]$ and $T = [n]$, the set $[m]$ is the support of an undominated strategy if and only if no strategy in $\Delta(m)$ is dominated, as shown in the argument for the universal formulation of rationality, and all actions of the row player are best responses to a belief $q\in\Delta(n)$ if and only if $Aq = \alpha\bm 1$ for some $\alpha\in\mathbb R$.
Hence, exactly one of the following is true: some strategy in $\Delta(m)$ is dominated, or a single belief on the column player's actions rationalizes all row player actions simultaneously.
Compare this to a classical observation of \citet[Lemma~3]{Pear84a} (see also Lemma~60.1 of \citealp{OsRu94a}): a row player action is either dominated or it is a best response to some belief on the column player's actions.
Therefore, if no row player action is dominated, each such action can be rationalized by some belief; \Cref{lem:support-beliefs} strengthens this conclusion from one belief per action to one belief for all actions.

In general, inheritance and rationality cannot single out one solution concept, because Nash equilibria need not be unique even in generic games: generic two-player games have a finite odd number of Nash equilibria \citep{Hars73a}, and this number can be larger than one on open sets of games.
Consider the coordination game
\begin{align*}
    (A,B) = \begin{pmatrix} (2,2) & (0,0)\\ (0,0) & (1,1) \end{pmatrix}
\end{align*}
with Nash equilibria $(1,1)$ and $(2,2)$ as well as a mixed Nash equilibrium in which both players play $(\tfrac{1}{3},\tfrac{2}{3})$.
On the proper subgames, rationality uniquely determines the solution because one of the players has a single undominated action; for the full game, each of the three pairs of Nash equilibrium supports---$(\{1\},\{1\})$, $(\{2\},\{2\})$, and $(\{1,2\},\{1,2\})$---yields a solution concept that satisfies inheritance and rationality.
What rationality and inheritance do imply---for arbitrary two-player games and without any genericity assumption---is the following.

\begin{lemma}\label{lem:sandwich}
    Let $A,B\in\mathbb R^{m\times n}$, and let $f$ be a solution concept for $(A,B)$ that satisfies inheritance and rationality.
    Then, for every subgame $(A,B)_{S,T}$ with $f((A,B)_{S,T}) = (S',T')$, there is a Nash equilibrium $(p,q)$ of $(A,B)_{S,T}$ such that
    \begin{align*}
        \supp(p)\subseteq S'\subseteq \mathit{BR}(q) \qquad\text{and}\qquad \supp(q)\subseteq T'\subseteq \mathit{BR}(p),
    \end{align*}
    where $\mathit{BR}(q) = \{i\in S\colon (Aq)_i = \max_{i'\in S}\,(Aq)_{i'}\}$ and $\mathit{BR}(p) = \{j\in T\colon (p^tB)_j = \max_{j'\in T}\,(p^tB)_{j'}\}$ denote the sets of best responses to $q$ and $p$ in $(A,B)_{S,T}$.
\end{lemma}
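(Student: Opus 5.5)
The plan is to use inheritance only through the one-sided reductions $(S,T')$ and $(S',T)$. In each of these, one player's actions are cut down to the solution while the other player keeps all of $S$ or all of $T$. Rationality together with \Cref{lem:support-beliefs} then yields beliefs that are supported inside the solution but are tested against all available actions. Any pair of such beliefs will turn out to be a Nash equilibrium of $(A,B)_{S,T}$.

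\emph{Step 1 (row player).} Let $f((A,B)_{S,T}) = (S',T')$. Since $(S',T')\subseteq (S,T')\subseteq (S,T)$, inheritance gives
\[
(S',T') = f((A,B)_{S,T}) \subseteq f((A,B)_{S,T'}).
\]
Write $f((A,B)_{S,T'}) = (S'',T'')$ with $S'\subseteq S''$. By rationality, $S''$ is the support of a strategy that is undominated in $(A,B)_{S,T'}$. By \Cref{lem:support-beliefs}, applied to the subgame $(A,B)_{S,T'}$, there is a belief $q\in\Delta(T')$ such that every action in $S''$ is a best response to $q$ among all actions in $S$. In particular, $S'\subseteq \mathit{BR}(q)$, where best responses are computed in $(A,B)_{S,T}$. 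This transfer is legitimate because $q$ has support in $T'\subseteq T$, so the vector $(Aq)_i$ for $i\in S$ is the same whether $q$ is viewed as a strategy in $(A,B)_{S,T'}$ or in $(A,B)_{S,T}$.

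\emph{Step 2 (column player).} Symmetrically, inheritance with $(S',T')\subseteq (S',T)\subseteq (S,T)$ gives $T'\subseteq$ the column component of $f((A,B)_{S',T})$. The column-player version of \Cref{lem:support-beliefs} then yields $p\in\Delta(S')$ with $T'\subseteq \mathit{BR}(p)$. The same remark applies: $p^tB$ restricted to $T$ is unchanged when $p$ is viewed in the larger subgame.

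\emph{Step 3 (equilibrium).} We have $\supp(p)\subseteq S'\subseteq \mathit{BR}(q)$, so every action played by $p$ is a best response to $q$, and hence $p$ is a best response to $q$ in $(A,B)_{S,T}$. Likewise, $\supp(q)\subseteq T'\subseteq \mathit{BR}(p)$, so $q$ is a best response to $p$. Therefore $(p,q)$ is a Nash equilibrium of $(A,B)_{S,T}$, and it satisfies both sandwich inclusions by construction.

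The only real obstacle is choosing which reduced games to apply inheritance to. The natural first attempt is the fully reduced game $(S',T')$. It gives equalizing beliefs, but only against actions inside $S'$ and $T'$, so it cannot exclude profitable deviations to $S\setminus S'$ or $T\setminus T'$. The one-sided reductions avoid this problem: each keeps the beliefs supported in the solution while the best-response test still ranges over the full action set of the responding player. Once these reductions are chosen, the remaining steps are routine bookkeeping with \Cref{lem:support-beliefs}.
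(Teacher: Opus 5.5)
Your proof is correct and matches the paper's argument: you apply inheritance to the one-sided reductions $(S,T')$ and $(S',T)$, then use rationality and \Cref{lem:support-beliefs} to obtain beliefs $q\in\Delta(T')$ and $p\in\Delta(S')$ that form a Nash equilibrium of $(A,B)_{S,T}$ with the required inclusions. Your remark on why the fully reduced game $(S',T')$ would not suffice matches the paper's observation that only these two one-sided instances of inheritance are needed.
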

\begin{proof}
    By inheritance, $(S',T')\subseteq f((A,B)_{S,T'})$ and $(S',T')\subseteq f((A,B)_{S',T})$.
    By rationality and \Cref{lem:support-beliefs} applied to the subgame $(A,B)_{S,T'}$, there is a belief $q\in\Delta(T')$ to which every action in the row component of $f((A,B)_{S,T'})$---in particular, every action in $S'$---is a best response, i.e., $S'\subseteq \mathit{BR}(q)$.
    Analogously, there is a belief $p\in\Delta(S')$ with $T'\subseteq \mathit{BR}(p)$.
    Then, $p$ is a best response to $q$ in $(A,B)_{S,T}$ because $\supp(p)\subseteq S'\subseteq \mathit{BR}(q)$, and $q$ is a best response to $p$ because $\supp(q)\subseteq T'\subseteq \mathit{BR}(p)$.
    Hence, $(p,q)$ is a Nash equilibrium of $(A,B)_{S,T}$ with the required inclusions.
\end{proof}

\Cref{lem:sandwich} states that all plausible actions are best responses to the conjectures of a single Nash equilibrium whose supports are themselves plausible.
Note that its proof uses only two instances of inheritance per subgame, namely those in which all implausible actions of exactly one player are discarded.

A Nash equilibrium $(p,q)$ of $(A,B)_{S,T}$ is quasi-strict if $\mathit{BR}(q) = \supp(p)$ and $\mathit{BR}(p) = \supp(q)$.
While uniqueness of Nash equilibrium is not generic, quasi-strictness of all Nash equilibria is: for almost all payoff matrices, all Nash equilibria of all subgames are regular and hence quasi-strict \citep{Hars73a,vDam91a}.

\begin{theorem}\label{thm:selection}
    Let $A,B\in\mathbb R^{m\times n}$ such that every Nash equilibrium of every subgame of $(A,B)$ is quasi-strict, and let $f$ be a solution concept for $(A,B)$ that satisfies inheritance.
    Then, $f$ satisfies rationality if and only if, for every subgame, $f((A,B)_{S,T})$ is the pair of supports of some Nash equilibrium of $(A,B)_{S,T}$.
\end{theorem}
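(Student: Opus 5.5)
The ``only if'' direction follows from \Cref{lem:sandwich} together with quasi-strictness. Let $f$ satisfy inheritance and rationality, and fix a subgame $(A,B)_{S,T}$ with $f((A,B)_{S,T}) = (S',T')$. \Cref{lem:sandwich} provides a Nash equilibrium $(p,q)$ of $(A,B)_{S,T}$ with
\[
\supp(p)\subseteq S'\subseteq \mathit{BR}(q)
\qquad\text{and}\qquad
\supp(q)\subseteq T'\subseteq \mathit{BR}(p).
\]
By assumption, $(p,q)$ is quasi-strict, so $\mathit{BR}(q)=\supp(p)$ and $\mathit{BR}(p)=\supp(q)$. Both chains of inclusions therefore collapse, giving $S'=\supp(p)$ and $T'=\supp(q)$. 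Hence $f((A,B)_{S,T})$ is the pair of supports of a Nash equilibrium. This direction is essentially immediate, and inheritance enters only through \Cref{lem:sandwich}.

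For the ``if'' direction, suppose that for every subgame $f((A,B)_{S,T}) = (\supp(p),\supp(q))$ for some Nash equilibrium $(p,q)$ of $(A,B)_{S,T}$. To verify rationality we must show that $\supp(p)$ and $\supp(q)$ are supports of undominated strategies in $(A,B)_{S,T}$. By \Cref{lem:support-beliefs}, it suffices to exhibit a belief $q'\in\Delta(T)$ to which every action in $\supp(p)$ is a best response, and similarly for the column player.

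The belief $q$ itself works. In a Nash equilibrium, every action in the support of $p$ is a best response to $q$, since otherwise shifting mass to a better action would improve $p^tAq$. Symmetrically, every action in $\supp(q)$ is a best response to $p$. So rationality holds. This direction uses neither inheritance nor quasi-strictness; the hypotheses of the theorem are needed only for the converse.

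Neither direction poses a real obstacle. The work is already contained in \Cref{lem:sandwich}, and the only point to check is that quasi-strictness applies to the specific equilibrium produced there. It does, because the hypothesis covers every Nash equilibrium of every subgame.
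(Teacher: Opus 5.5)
Your proof is correct and matches the paper's argument essentially verbatim. For the forward direction, quasi-strictness collapses the inclusion chains from Lemma~\ref{lem:sandwich}. For the converse, the equilibrium strategies themselves serve as the common beliefs in Lemma~\ref{lem:support-beliefs}.
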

\begin{proof}
    For the direction from left to right, \Cref{lem:sandwich} yields a Nash equilibrium $(p,q)$ of $(A,B)_{S,T}$ with $\supp(p)\subseteq S'\subseteq \mathit{BR}(q)$ and $\supp(q)\subseteq T'\subseteq \mathit{BR}(p)$, and quasi-strictness turns both chains of inclusions into equalities.
    For the direction from right to left, let $f((A,B)_{S,T}) = (\supp(p),\supp(q))$ for some Nash equilibrium $(p,q)$ of $(A,B)_{S,T}$.
    Then, every action in $\supp(p)$ is a best response to the belief $q$ and every action in $\supp(q)$ is a best response to the belief $p$, so rationality follows from \Cref{lem:support-beliefs}.
\end{proof}

Under the assumptions of \Cref{thm:selection}, inheritance and rationality thus characterize the inheritance-consistent selections of Nash equilibrium supports rather than a single solution concept, and the coordination game above shows that this multiplicity cannot be avoided.
The remaining freedom is the classical coordination problem: the axioms discipline the plausible actions to be supported by the conjectures of some Nash equilibrium of every subgame, but which equilibrium the players settle on is a matter of convention.
This conclusion parallels the consistency-based characterization of the Nash equilibrium correspondence by \citet{NPRV96a}.
If every subgame has a unique Nash equilibrium, however, the selection is unique and the characterization takes the following form.

\begin{corollary}\label{cor:unique}
    Let $A,B\in\mathbb R^{m\times n}$ such that every subgame of $(A,B)$ has a unique Nash equilibrium, and let $f$ be a solution concept for $(A,B)$.
    Then, $f$ satisfies inheritance and rationality if and only if $f = \fnash$.
\end{corollary}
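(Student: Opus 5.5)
The plan is to deduce \Cref{cor:unique} from \Cref{lem:sandwich} and \Cref{thm:selection}. The one issue is that \Cref{thm:selection} assumes quasi-strictness, while the corollary only assumes uniqueness. So the first step is to show that uniqueness of Nash equilibrium in every subgame already forces the equilibria to be quasi-strict. Alternatively, we can avoid quasi-strictness altogether and argue directly with \Cref{lem:sandwich} and inheritance.

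For the direction from left to right, let $f$ satisfy inheritance and rationality. Fix a subgame $(A,B)_{S,T}$ with $f((A,B)_{S,T}) = (S',T')$. By \Cref{lem:sandwich} there is a Nash equilibrium $(p,q)$ of $(A,B)_{S,T}$ with $\supp(p)\subseteq S'\subseteq \mathit{BR}(q)$ and $\supp(q)\subseteq T'\subseteq\mathit{BR}(p)$. By uniqueness, $(p,q)$ is the unique equilibrium of $(A,B)_{S,T}$. It remains to rule out strict inclusion $\supp(p)\subsetneq S'$, and symmetrically for $T'$.

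To do this, we pass to the subgame $(A,B)_{S',T'}$. By the remark after the definition of inheritance, $f((A,B)_{S',T'}) = (S',T')$. Also, $(p,q)$ restricted to $S'\times T'$ is a Nash equilibrium of $(A,B)_{S',T'}$: its supports lie in $S'\times T'$, and every action of $S'$ is a best response to $q$ within $S$, hence within $S'$. By uniqueness it is the only equilibrium there. In $(A,B)_{S',T'}$ every row action in $S'$ is a best response to $q$, so $A_{S',T'}q$ is constant on $S'$, and likewise $p^tB_{S',T'}$ is constant on $T'$.

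The key step, and the main obstacle, is to show that in a game where $q$ equalizes all row payoffs and $p$ equalizes all column payoffs, uniqueness of the equilibrium forces full supports. Suppose some $i\in S'\setminus\supp(p)$ exists. I would exhibit a second equilibrium in a subgame, contradicting the hypothesis. The first attempt uses the solution of the system. Consider the affine set $\{p'\in\mathbb R^{S'}\colon p'^tB_{S',T'} = c\bm 1,\ \bm 1^tp'=1\}$. If it contains a point other than $p$, move from $p$ along that line within $\Delta(S')$; any such nonnegative $p'$ paired with $q$ is again an equilibrium, since all of $S'$ are best responses to $q$ and all of $T'$ are best responses to $p'$. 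This would give $p'\neq p$, a contradiction. If instead the affine set is the single point $p$, the linear system has more equations than needed. In that case I would look for a subgame, for instance dropping $i$, or dropping a column outside $\supp(q)$, in which $(p,q)$ persists and a perturbation produces a second equilibrium. I expect this dimension count to be the delicate part. A fallback is to show that every equilibrium of a game whose subgames all have unique equilibria is quasi-strict, via the standard argument that a non-quasi-strict unique equilibrium yields, in the subgame restricted to $\supp(p)\cup\{i\}$, either a second equilibrium or a contradiction. Once $\supp(p)=S'$ and $\supp(q)=T'$, we get $f=\fnash$.

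For the converse, $\fnash$ satisfies rationality by \Cref{lem:support-beliefs}, since all support actions are best responses to the opponent's equilibrium strategy. For inheritance, suppose $\fnash((A,B)_{S,T}) = (\supp p,\supp q)\subseteq(\bar S,\bar T)\subseteq(S,T)$. Then $(p,q)$ remains an equilibrium of $(A,B)_{\bar S,\bar T}$, because deviations there are a subset of the deviations in $(A,B)_{S,T}$. By uniqueness it is that subgame's equilibrium, so $\fnash((A,B)_{\bar S,\bar T}) = \fnash((A,B)_{S,T})$, which gives inheritance.
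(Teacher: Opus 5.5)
\textbf{Gap in the left-to-right direction.} Your converse direction and your reduction to the subgame $(A,B)_{S',T'}$ are correct. The left-to-right direction has a genuine gap at the step you call the main obstacle: you never establish $\supp(p)=S'$ and $\supp(q)=T'$, and this is exactly where quasi-strictness of the unique equilibrium is needed.

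Your first branch does not work as stated. If $p$ vanishes on two or more actions of $S'$, a direction $d$ of the affine set may have entries of opposite signs on those coordinates. Then $p+\epsilon d\notin\Delta(S')$ for every $\epsilon\neq 0$, and you get no second equilibrium. Your second branch, where the affine set is $\{p\}$, is only announced. It can genuinely occur, for example when $T'$ also contains columns outside $\supp(q)$, which add equations without adding unknowns. As written, the proof therefore reduces the corollary to an unproven claim.

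\textbf{How the paper closes it, and an elementary alternative.} The paper cites \citet{Nord99a}: every two-player game has a quasi-strict Nash equilibrium, so a unique one is quasi-strict, and \Cref{thm:selection} then applies directly.

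Your fallback can also be completed without this citation, because uniqueness holds in every subgame. The dimension count must, however, be done with the columns restricted to $T_0=\supp(q)$; with $T'$ or $T$ the extra columns break the count. Let $S_0=\supp(p)$.
\begin{enumerate}
\item Uniqueness in $(A,B)_{S_0,T_0}$ forces the homogeneous indifference systems $d^tB_{S_0,T_0}=\gamma\bm 1^t$, $\bm 1^td=0$ and $A_{S_0,T_0}e=\delta\bm 1$, $\bm 1^te=0$ to have only trivial solutions. Otherwise, perturbing the completely mixed equilibrium $(p,q)$ along $d$ or $e$ gives another equilibrium. Comparing the numbers of unknowns and equations in both systems yields $|S_0|=|T_0|$.
\item Take $i\in S'\setminus S_0$. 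The row system on $(A,B)_{S_0\cup\{i\},T_0}$ now has one more unknown than equations, so it has a nontrivial solution $(d,\gamma)$. By the previous step, $d_i\neq 0$.
\item Every action in $S_0\cup\{i\}\subseteq S'\subseteq\mathit{BR}(q)$ is a best response to $q$, and $p+\epsilon d$ keeps all of $T_0$ indifferent. Hence $(p+\epsilon d,q)$, with $\epsilon d_i>0$ small, is a second equilibrium of that subgame, contradicting uniqueness.
\end{enumerate}
The column side is symmetric. Without this argument or the citation, the proof is incomplete.
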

\begin{proof}
    Since every two-player game has a quasi-strict Nash equilibrium \citep{Nord99a}, the unique Nash equilibrium of every subgame of $(A,B)$ is quasi-strict.
    If $f$ satisfies inheritance and rationality, then, by \Cref{thm:selection}, $f((A,B)_{S,T})$ is the pair of supports of the unique Nash equilibrium of $(A,B)_{S,T}$ for every subgame, i.e., $f = \fnash$.
    Conversely, $\fnash$ satisfies inheritance because a Nash equilibrium of $(A,B)_{S,T}$ whose supports are contained in $(S',T')$ is also a Nash equilibrium of $(A,B)_{S',T'}$ and hence the unique one; rationality then follows from \Cref{lem:support-beliefs}.
\end{proof}

\Cref{cor:unique} covers, for example, all games in which every subgame is solvable by the iterated elimination of strictly dominated actions, such as the prisoner's dilemma.

The analysis remains intrinsically two-player, however.
In a two-player game, a belief about the opponent is itself a mixed strategy of the opponent, so a pair of mutually justifying beliefs has exactly the form of a Nash equilibrium.
With three or more players, a player's belief is instead a distribution over profiles of opponents' actions.
Such beliefs may correlate the opponents' actions, and different players may induce different marginal beliefs about the same third player.
Thus, a direct many-player extension would require additional assumptions ensuring both agreement and independence of conjectures.
This is the same two-player/many-player divide identified by \citet{AuBr95a}: in two-player games, mutual knowledge of rationality, payoffs, and conjectures suffices, whereas their many-player result uses a common prior together with common knowledge of conjectures to recover a Nash equilibrium.

\section{Symmetric Games}\label{sec:symmetric}

For symmetric games, it is natural to restrict attention to symmetric subgames, in which both players have the same set of actions; the one-sided reductions used in the proof of \Cref{lem:sandwich} are then unavailable.
A two-player game is symmetric if $n = m$ and $B = A^t$, so that the game is invariant under exchanging the roles of the players.
Its symmetric subgames are the subgames $(A,A^t)_{S,S}$ for nonempty $S\subseteq[m]$, and a symmetric solution concept for the symmetric subgames of $(A,A^t)$ is a function $f$ with $f((A,A^t)_{S,S}) = (S',S')$ for some nonempty $S'\subseteq S$, where inheritance and rationality are imposed for symmetric subgames only.
In symmetric games, both players have the same best responses to any given belief $z$, since $(z^tB)_j = (z^tA^t)_j = (Az)_j$; we write $\mathit{BR}_E(z) = \{i\in E\colon (Az)_i = \max_{i'\in E}\,(Az)_{i'}\}$ for this common set of best responses in $(A,A^t)_{E,E}$.
Moreover, if $(p,q)$ is a Nash equilibrium of a symmetric subgame, then so is $(q,p)$; hence, a unique Nash equilibrium of $(A,A^t)_{S,S}$ is of the form $(\sigma_S,\sigma_S)$, and we refer to $\supp(\sigma_S)$ as its support.
The following lemma, whose proof relies on the index theory of Nash equilibria, is the key structural fact for the symmetric case.

\begin{lemma}\label{lem:odd-support}
    Let $(A,A^t)$ be a symmetric two-player game such that every symmetric subgame has a unique Nash equilibrium.
    Then, the support of the unique Nash equilibrium of every symmetric subgame is of odd size.
\end{lemma}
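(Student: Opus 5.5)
The plan is to use the \emph{bimatrix} Shapley index, not a symmetric index. The hypothesis says that $(A,A^t)_{S,S}$ has a unique Nash equilibrium among \emph{all} strategy pairs, so the standard index theory for two-player games applies. Fix a nonempty $S$ and let $(\sigma_S,\sigma_S)$ be the unique equilibrium, with support $E$ and $k=|E|$.

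\textbf{Step 1: reduce to the support and fix the sign of the index.}
\begin{itemize}[nosep]
\item Adding a constant to all entries of $A$ changes neither best responses nor equilibria. So I may assume all entries are positive, and hence all equilibrium payoffs are positive.
\item By the argument in the proof of \Cref{cor:unique}, $(\sigma_S,\sigma_S)$ is also the unique equilibrium of $(A,A^t)_{E,E}$, where it is completely mixed.
\item By \citet{Nord99a} it is quasi-strict in $(A,A^t)_{S,S}$.
\item The sum of the Shapley indices of the equilibria of a bimatrix game is $+1$, and an isolated unique equilibrium carries the whole sum. Hence the index of $(\sigma_S,\sigma_S)$ is $+1$.
\end{itemize}

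\textbf{Step 2: compute the index.} For a nondegenerate equilibrium $(p,q)$ with supports $I,J$, $|I|=|J|=k$, and positive payoffs, the index is $(-1)^{k+1}\operatorname{sign}\det A_{I,J}\cdot\operatorname{sign}\det B_{I,J}$. Here $I=J=E$ and $B_{E,E}=(A_{E,E})^t$, so $\det B_{E,E}=\det A_{E,E}$. If this determinant is nonzero, the two signs multiply to $+1$, the index equals $(-1)^{k+1}$, and index $+1$ forces $k$ to be odd. The pleasant feature is that the sign of the determinant cancels, so no information about $A$ beyond symmetry is needed.

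\textbf{Step 3 (main obstacle): justify nondegeneracy.} The formula needs equal supports (true by symmetry), quasi-strictness (Step 1), and $\det A_{E,E}\neq 0$. I would derive the last from uniqueness. The equilibrium satisfies $A_{E,E}\,\sigma = v\bm 1$ with $v>0$.
\begin{itemize}[nosep]
\item Suppose $A_{E,E}x=0$ for some $x\neq 0$.
\item If $\bm 1^t x\neq 0$, then $\bm 1=A_{E,E}\sigma/v$ lies in the column space of $A_{E,E}$. Hence $\bm 1^t x = \sigma^tA_{E,E}^t x/v = (A_{E,E}\sigma)^t x/v$, but I would instead use the transposed kernel vector for $A^t_{E,E}$ and argue similarly. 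In any case I can pick a nonzero $y$ in $\ker A_{E,E}$ or $\ker A_{E,E}^t$ with $\bm 1^t y=0$, and ensure $\bm 1^t y=0$ by subtracting a multiple of $\sigma$ if needed, since $\sigma$ is not in the kernel because $v>0$.
\item Then $(\sigma+\epsilon y,\sigma)$ or $(\sigma,\sigma+\epsilon y)$ is a strategy pair for small $\epsilon$, because $\sigma$ is completely mixed on $E$.
\item Its opponent still faces the payoff vector $v\bm 1$ on $E$, and by quasi-strictness strictly worse payoffs off $E$ for small $\epsilon$. So it is a second equilibrium of $(A,A^t)_{S,S}$, contradicting uniqueness.
\end{itemize}
Hence $\det A_{E,E}\neq0$, the equilibrium is nondegenerate, and the parity conclusion follows.

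I expect the delicate points to be the sign bookkeeping in the kernel argument and citing the correct normalization of the index formula. If the formula's convention differs, I would fall back on the Lemke–Howson parity argument, whose sign convention is the same.
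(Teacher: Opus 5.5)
Your proposal follows essentially the paper's proof: reduce to the support $E$, use uniqueness to rule out a singular direction, and read off the index $(-1)^{k+1}$ from the determinant formula because $\det B_{E,E}=\det A_{E,E}$. With your positive normalization, $v>0$ makes $\det A_{E,E}\neq 0$ equivalent to the paper's nonsingularity of $\left(\begin{smallmatrix} A & -\bm 1\\ \bm 1^t & 0\end{smallmatrix}\right)$. Step 3 needs tidying. Drop the $\ker A_{E,E}^t$ option, which gives nothing because in a symmetric game both players face $A_{E,E}z$. Also, $y=x-(\bm 1^t x)\sigma$ satisfies $A_{E,E}y=-(\bm 1^t x)v\bm 1$, so the opponent faces the constant vector $(1-\epsilon\,\bm 1^t x)v\bm 1$ on $E$ rather than $v\bm 1$; a constant vector is all the second-equilibrium argument requires.
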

\begin{proof}
    Consider a symmetric subgame $(A,A^t)_{S,S}$ with unique Nash equilibrium $(\sigma_S,\sigma_S)$, and let $E = \supp(\sigma_S)$.
    The restriction of $\sigma_S$ to $E$ is a completely mixed Nash equilibrium of $(A,A^t)_{E,E}$, which by assumption is the unique Nash equilibrium of $(A,A^t)_{E,E}$.
    It therefore suffices to show that a symmetric $k\times k$ game whose unique Nash equilibrium $(x,x)$ is completely mixed has $k$ odd; we write $A$ for the $k\times k$ payoff matrix.

    Since $x$ is completely mixed, all pure strategies are best responses to $x$, so $Ax = v\bm 1$ for some $v\in\mathbb R$.
    We first show that $(x,x)$ is regular, in the sense that the matrix $\left(\begin{smallmatrix} A & -\bm 1 \\ \bm 1^t & 0\end{smallmatrix}\right)$ is nonsingular.
    Otherwise, there is $(d,\alpha)\neq 0$ with $Ad = \alpha\bm 1$ and $\bm 1^t d = 0$, where $d\neq 0$ because $d = 0$ would force $\alpha = 0$.
    Then, $x + \epsilon d$ is a completely mixed strategy for small enough $\epsilon > 0$, and $A(x+\epsilon d) = (v + \epsilon\alpha)\bm 1$, so $(x + \epsilon d, x + \epsilon d)$ is a second completely mixed Nash equilibrium, contradicting uniqueness.

    The equilibrium index of a Nash equilibrium was introduced by \citet{Shap74a} and can be defined via a determinant \citep[Section~6]{vSte21b}: it assigns to each regular Nash equilibrium of a finite game a value in $\{+1,-1\}$ such that the indices of all Nash equilibria sum to $+1$.
    The determinant formula immediately yields index $(-1)^{k+1}$ for a regular completely mixed equilibrium of a symmetric $k\times k$ game. 
    Since $(x,x)$ is the unique Nash equilibrium and is regular, its index is $+1$; hence $(-1)^{k+1} = +1$, so $k$ is odd.\footnote{For symmetric zero-sum games, where $A$ is skew-symmetric, the index theory can be avoided: from $Ax = v\bm 1$ and $v = x^tAx = 0$ we get $Ax = 0$, so $A$ is singular; were $k$ even, the kernel of $A$ would have dimension at least two, because skew-symmetric matrices have even rank, and would contain a vector $d$ with $\bm 1^td = 0$ that is not a multiple of $x$, making $x + \varepsilon d$ a second completely mixed Nash equilibrium for small $\varepsilon > 0$ and contradicting uniqueness.}
\end{proof}

\begin{theorem}\label{thm:symmetric}
    Let $(A,A^t)$ be a symmetric two-player game such that every symmetric subgame of $(A,A^t)$ has a unique Nash equilibrium, and let $f$ be a symmetric solution concept for the symmetric subgames of $(A,A^t)$, where $\fnash$ returns the pair of supports of the unique Nash equilibrium of every symmetric subgame.
    Then, $f$ satisfies inheritance and rationality if and only if $f = \fnash$.
\end{theorem}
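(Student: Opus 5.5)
We prove the two directions separately; the converse is routine and the forward direction goes by induction on $|S|$.

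\emph{Converse.} Suppose $f = \fnash$. Inheritance holds exactly as in the proof of \Cref{cor:unique}. If $\supp(\sigma_S)\subseteq S''\subseteq S$, then the restriction of $(\sigma_S,\sigma_S)$ is a Nash equilibrium of $(A,A^t)_{S'',S''}$, hence its unique one, so $\fnash((A,A^t)_{S'',S''}) = \fnash((A,A^t)_{S,S})$. Rationality follows from \Cref{lem:support-beliefs} with the common belief $\sigma_S$.

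\emph{Forward direction: setup and the case $S'=S$.} Let $f$ satisfy inheritance and rationality, and let $f((A,A^t)_{S,S}) = (S',S')$. We show $S' = E_S := \supp(\sigma_S)$ by induction on $|S|$. Two facts are used throughout. First, by the argument in \Cref{cor:unique} (via \citet{Nord99a}), each unique equilibrium $\sigma_S$ is quasi-strict, so $\mathit{BR}_S(\sigma_S) = E_S$. Second, by \Cref{lem:odd-support}, $|E_S|$ is odd. In the case $S' = S$, rationality and \Cref{lem:support-beliefs} give $z\in\Delta(S)$ with $S\subseteq\mathit{BR}_S(z)$. Hence $(z,z)$ is a Nash equilibrium, so $z=\sigma_S$ and $S = \mathit{BR}_S(\sigma_S) = E_S$.

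\emph{Forward direction: the case $S'\subsetneq S$.} By the observation after the definition of inheritance, $f((A,A^t)_{S',S'}) = (S',S')$. The induction hypothesis then gives $S' = E_{S'}$, so $\sigma_{S'}$ is completely mixed on $S'$ and $|S'|$ is odd. It suffices to show that $\sigma_{S'}$, extended by zeros, is a Nash equilibrium of $(A,A^t)_{S,S}$; uniqueness then forces $E_S = S'$. Suppose not. Then some $a\in S\setminus S'$ satisfies $(A\sigma_{S'})_a > (A\sigma_{S'})_i$ for $i\in S'$. Let $T = S'\cup\{a\}$.
\begin{itemize}[nosep]
\item[(i)] If $T\subsetneq S$: inheritance (since $S'\subseteq T\subseteq S$) gives $S'\subseteq$ the component of $f((A,A^t)_{T,T})$, which by induction equals $E_T$. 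But $E_T\neq S'$, because $\sigma_{S'}$ is not an equilibrium on $T$. Hence $E_T = T$, which has even size, contradicting \Cref{lem:odd-support}.
\item[(ii)] If $T = S$: this is the case I expect to be the main obstacle, because the induction hypothesis no longer reaches the game itself.
\end{itemize}

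\emph{Plan for the hard case $S = S'\cup\{a\}$.} Rationality gives $z\in\Delta(S)$ with $S'\subseteq\mathit{BR}_S(z)$. If also $a\in\mathit{BR}_S(z)$, then $(z,z)$ is an equilibrium with full best-response set, so by quasi-strictness $E_S = S$, which has even size, a contradiction. Thus $\mathit{BR}_S(z) = S'$ exactly: $a$ is strictly worse than $S'$ against $z$, but strictly better against $\sigma_{S'}$. I would try to turn this into a second equilibrium, or an even-sized support, in some symmetric subgame. The first attempt is to move along the segment between $z$ and $\sigma_{S'}$ and track when $a$ ties with the top actions of $S'$, aiming to produce a symmetric equilibrium of a proper subgame $(A,A^t)_{R,R}$ with $a\in R$, to which case (i)'s parity argument and the induction hypothesis apply. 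If that fails, the fallback is to use the index theory behind \Cref{lem:odd-support} directly on the game $(A,A^t)_{S,S}$. The idea is to show that $\sigma_{S'}$ together with the strict best-response structure of $z$ forces an additional equilibrium, or an equilibrium whose support has even size, contradicting uniqueness.
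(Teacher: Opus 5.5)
\textbf{Gap: case (ii) is only a plan.} Your converse, your case $S'=S$, and your case (i) are correct. However, case (ii), $S = S'\cup\{a\}$, is left as a plan, and this is where the substance of the theorem lies. For $|S|=2$ and $|S'|=1$ it is the only case.

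\textbf{The missing step.} The segment idea is the right one, but you aim it at the wrong target and miss the observation that makes it work. Let $z_\lambda = (1-\lambda)\sigma_{S'} + \lambda z$.
\begin{itemize}
\item At $\lambda=0$, all actions in $S'$ earn the same payoff $v$, because $\sigma_{S'}$ is completely mixed on $S'$.
\item At $\lambda=1$, they again earn a common payoff, because $S'\subseteq\mathit{BR}_S(z)$.
\item Since $\lambda\mapsto Az_\lambda$ is affine, \emph{all} actions of $S'$, not just the top ones, share a common payoff $u_\lambda$ for every $\lambda$.
\end{itemize}
As $(Az_0)_a - u_0 > 0$ and $(Az_1)_a - u_1 < 0$, there is $\lambda^*$ with $(Az_{\lambda^*})_a = u_{\lambda^*}$, i.e., $\mathit{BR}_S(z_{\lambda^*}) = S$. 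Then $(z_{\lambda^*},z_{\lambda^*})$ is a Nash equilibrium of $(A,A^t)_{S,S}$ whatever its support, so it equals $(\sigma_S,\sigma_S)$. Quasi-strictness then gives $\supp(\sigma_S) = S$ with $|S|$ even, contradicting \Cref{lem:odd-support}. This is the same contradiction you already used when $a\in\mathit{BR}_S(z)$.

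So the contradiction arises in $(A,A^t)_{S,S}$ itself, not in a proper subgame, and no index theory beyond \Cref{lem:odd-support} is needed. Tracking only ties between $a$ and the \emph{best} actions of $S'$, as your phrasing suggests, would not suffice: $z_{\lambda^*}$ could then put weight on actions that are not best responses, and $(z_{\lambda^*},z_{\lambda^*})$ would not be an equilibrium.

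\textbf{Simplification once this step is in place.} Your induction and the split into (i) and (ii) become unnecessary, and this is how the paper argues.
\begin{itemize}
\item $S' = \supp(\sigma_{S'})$ follows directly by applying your $S'=S$ argument to the subgame $(A,A^t)_{S',S'}$, which has $f$-value $(S',S')$ by inheritance.
\item For every $a\in S\setminus S'$, set $E = S'\cup\{a\}$. Inheritance and rationality give $q'\in\Delta(E)$ with $S'\subseteq\mathit{BR}_E(q')$.
\item The segment from $\sigma_{S'}$ to $q'$ then yields the even-support contradiction in $(A,A^t)_{E,E}$, whether or not $E = S$.
\end{itemize}
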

\begin{proof}
    That $\fnash$ satisfies inheritance and rationality follows as in the proof of \Cref{cor:unique}, with all subgames replaced by symmetric subgames.

    For the converse, let $f$ satisfy inheritance and rationality, and consider a symmetric subgame $(A,A^t)_{S,S}$ with $f((A,A^t)_{S,S}) = (S',S')$.
    By inheritance, $f((A,A^t)_{S',S'}) = (S',S')$, and rationality together with \Cref{lem:support-beliefs} yields beliefs $p,q\in\Delta(S')$ such that $S'\subseteq \mathit{BR}_{S'}(q)$ and $S'\subseteq \mathit{BR}_{S'}(p)$.
    Then, $p$ is a best response to $q$ and $q$ is a best response to $p$ in $(A,A^t)_{S',S'}$, so $(p,q)$ is a Nash equilibrium of $(A,A^t)_{S',S'}$ and hence coincides with the unique Nash equilibrium $(\sigma_{S'},\sigma_{S'})$.
    Since unique Nash equilibria are quasi-strict \citep{Nord99a}, $S'\subseteq \mathit{BR}_{S'}(\sigma_{S'}) = \supp(\sigma_{S'})\subseteq S'$.
    Thus, $\supp(\sigma_{S'}) = S'$, the size of $S'$ is odd by \Cref{lem:odd-support}, and $(A\sigma_{S'})_s = v$ for all $s\in S'$, where $v = \max_{s\in S'}\,(A\sigma_{S'})_s$.

    Next, let $x\in S\setminus S'$ and $E = S'\cup\{x\}$.
    By inheritance, $(S',S')\subseteq f((A,A^t)_{E,E})$, and rationality together with \Cref{lem:support-beliefs} yields a belief $q'\in\Delta(E)$ with $S'\subseteq \mathit{BR}_E(q')$.
    We claim that $(A\sigma_{S'})_x < v$.
    Suppose otherwise and consider $z_\lambda = (1-\lambda)\,\sigma_{S'} + \lambda\, q'$ for $\lambda\in[0,1]$.
    At both endpoints, $(Az_\lambda)_s$ is the same for all $s\in S'$---for $z_0 = \sigma_{S'}$ because $(A\sigma_{S'})_s = v$ for all $s\in S'$, and for $z_1 = q'$ because $S'\subseteq \mathit{BR}_E(q')$---and hence for every $\lambda$; denote this common value by $u_\lambda$.
    Now, $(Az_0)_x \ge u_0$ by supposition and $(Az_1)_x \le u_1$ because $S'\subseteq\mathit{BR}_E(q')$, so, by continuity, $(Az_{\lambda^*})_x = u_{\lambda^*}$ for some $\lambda^*\in[0,1]$.
    Then, every action in $E$ is a best response to $z_{\lambda^*}$, so $(z_{\lambda^*},z_{\lambda^*})$ is a Nash equilibrium of $(A,A^t)_{E,E}$ and hence coincides with $(\sigma_E,\sigma_E)$; quasi-strictness implies $\supp(\sigma_E) = \mathit{BR}_E(\sigma_E) = E$.
    This contradicts \Cref{lem:odd-support}, because $|E| = |S'| + 1$ is even.

    Hence, $(A\sigma_{S'})_x < v$ for all $x\in S\setminus S'$, so $\mathit{BR}_S(\sigma_{S'}) = S'$ and $(\sigma_{S'},\sigma_{S'})$ is a Nash equilibrium of $(A,A^t)_{S,S}$.
    It therefore coincides with $(\sigma_S,\sigma_S)$, and $f((A,A^t)_{S,S}) = (S',S') = (\supp(\sigma_S),\supp(\sigma_S)) = \fnash((A,A^t)_{S,S})$.
\end{proof}

In the next section, we apply this result to symmetric zero-sum games and, in particular, to tournament games, whose Nash equilibria are always unique.

\section{Zero-Sum Games}\label{sec:zerosum}

A zero-sum game is a two-player game with $B = -A$; we identify it with the row player's payoff matrix $A\in\mathbb R^{m\times n}$, so that $A_{ij}$ is the row player's payoff and $-A_{ij}$ the column player's payoff for the action profile $(i,j)$.
The column player's best responses to a belief $p\in\Delta(S)$ in $A_{S,T}$ are then the actions $j\in T$ that minimize $(p^tA)_j$, and a strategy $q\in\Delta(T)$ of the column player is dominated in $A_{S,T}$ if there exists a strategy $\tilde q\in\Delta(T)$ with $A_{S,T}\,\tilde q \ll A_{S,T}\,q$.
Call a zero-sum game $A\in\mathbb R^{m\times n}$ \emph{generic} if every subgame $A_{S,T}$ has a unique Nash equilibrium.\footnote{For zero-sum games, this is the subgame-wise version of regularity: a finite zero-sum game is regular if and only if it has a unique Nash equilibrium. Indeed, if there are two equilibria, the convexity of the sets of optimal strategies yields a continuum of equilibria, so equilibria are not isolated. Conversely, a unique zero-sum equilibrium is quasi-strict: any unused pure best response could be added to an equilibrium strategy to obtain another equilibrium. If the indifference equations on the equilibrium supports were singular, a nonzero support-preserving perturbation would keep all used actions indifferent and, by quasi-strictness, would yield another equilibrium for sufficiently small perturbations.}
In contrast to arbitrary two-player games, uniqueness of Nash equilibrium is generic for zero-sum games: the set of non-generic zero-sum games is contained in a finite union of lower-dimensional algebraic subsets of $\mathbb R^{m\times n}$ (zero sets of nonzero polynomials in the entries of $A$) and therefore has measure $0$ and is nowhere dense.
Our main result for zero-sum games is now an immediate consequence of \Cref{cor:unique}.

\begin{corollary}\label{cor:zerosum}
    Let $A\in\mathbb R^{m\times n}$ be a generic zero-sum game, and let $f$ be a solution concept for $A$.
    Then, $f$ satisfies inheritance and rationality if and only if $f = \fnash$.
\end{corollary}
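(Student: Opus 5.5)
The plan is to derive \Cref{cor:zerosum} directly from \Cref{cor:unique}, specialized to $B = -A$. The definition of a generic zero-sum game says exactly that every subgame $A_{S,T}$ has a unique Nash equilibrium. This is precisely the hypothesis of \Cref{cor:unique} for the two-player game $(A,-A)$. So the only real work is to check that the zero-sum conventions of this section agree with the general definitions.

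First, I will note that the general notions specialize correctly. With $B=-A$, column best responses to $p$ maximize $(p^t(-A))_j$, that is, they minimize $(p^tA)_j$. Column dominance $B_{S,T}q \ll B_{S,T}\tilde q$ becomes $A_{S,T}\tilde q \ll A_{S,T}q$. Hence rationality and inheritance for a solution concept on $A$ coincide with rationality and inheritance for the same solution concept on $(A,-A)$. Likewise, Nash equilibria of $A_{S,T}$ in the zero-sum sense are exactly the Nash equilibria of $(A,-A)_{S,T}$. Therefore $\fnash$ is well defined and is the same object in both readings.

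Second, I will invoke \Cref{cor:unique}. Every subgame has a unique Nash equilibrium, so a solution concept $f$ satisfies inheritance and rationality if and only if $f=\fnash$. For completeness, the quasi-strictness used inside that proof can also be obtained without \citet{Nord99a}. By the argument in the footnote, an unused pure best response could be mixed into the equilibrium strategy. By interchangeability of optimal strategies in zero-sum games, this yields a second equilibrium, which contradicts uniqueness.

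There is no substantive obstacle, since the statement is essentially an instance of \Cref{cor:unique}. The only point needing care is the bookkeeping in the first step: verifying that the sign conventions for column dominance and best responses in the zero-sum formulation match the general definitions, so that the corollary applies verbatim.
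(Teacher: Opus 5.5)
Your proposal is correct and is essentially the paper's argument: genericity means every subgame has a unique Nash equilibrium, so \Cref{cor:unique} applies verbatim to $(A,-A)$, with quasi-strictness supplied by the existence of quasi-strict equilibria (Norde in general, or Bohnenblust--Karlin--Shapley for zero-sum games). Your optional aside for avoiding that citation is not valid as stated, though nothing depends on it: mixing an unused best response $i$ into $p$ gives a best response to $q$ but not necessarily a maximin strategy. For example, take rows $(1,-1)$, $(-1,1)$, $(2,-2)$ with $p=(\tfrac12,\tfrac12,0)$ and $q=(\tfrac12,\tfrac12)$; then $(1-\epsilon)p+\epsilon e_3$ guarantees only $-2\epsilon<0$. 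Interchangeability cannot repair this, so you genuinely need the strict complementarity theorem of Bohnenblust, Karlin, and Shapley.
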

\begin{proof}
    By definition, every subgame of a generic zero-sum game has a unique Nash equilibrium, so the statement follows from \Cref{cor:unique}.\footnote{For zero-sum games, the existence of quasi-strict Nash equilibria used in the proof of \Cref{cor:unique} goes back to \citet{BKS50a}.}
\end{proof}

\begin{remark}[Independence of the axioms]\label{rem:independence}
    Rationality and inheritance are independent.
    The solution concept that returns, in every subgame, the actions that maximize a player's total payoff over the opponent's available actions---equivalently, the best responses to the uniform belief about the opponent---satisfies rationality by \Cref{lem:support-beliefs}, but violates inheritance.
    For example, for the generic game
    \begin{align*}
        A = \begin{pmatrix} 1 & \hphantom{-}2\\ 4 & -2 \end{pmatrix},
    \end{align*}
    it returns $(\{1\},\{2\})$, since the first row maximizes $\sum_{j} A_{ij}$ and the second column minimizes $\sum_{i} A_{ij}$; but for the subgame $A_{\{1\},\{1,2\}}$, it returns $(\{1\},\{1\})$, because the column player's best response to the first row is the first column.
    Hence, $(\{1\},\{2\})\subseteq (\{1\},\{1,2\})$, but $(\{1\},\{2\})\not\subseteq (\{1\},\{1\})$.
    Conversely, the trivial solution concept, which returns $(S,T)$ for every subgame $A_{S,T}$, satisfies inheritance, because $f(A_{S,T})\subseteq (S',T')$ implies $(S',T') = (S,T)$, but it violates rationality for every generic game in which one of the players has more than one action: if the column player has two available actions, then any one-row, two-column subgame has distinct payoffs by genericity, and every full-support column strategy is dominated by the pure action with the lower payoff for the row player; the case of two row actions in a one-column subgame is analogous.
\end{remark}

A zero-sum game is symmetric as a two-player game if and only if $A = -A^t$, i.e., $A$ is skew-symmetric: then $B = -A = A^t$, so the game is both zero-sum and symmetric.
\Cref{thm:symmetric} therefore applies to symmetric zero-sum games without any further assumption on the supports.

\begin{corollary}\label{cor:symzero}
    Let $A\in\mathbb R^{m\times m}$ be a symmetric zero-sum game such that every symmetric subgame of $A$ has a unique Nash equilibrium, and let $f$ be a symmetric solution concept for the symmetric subgames of $A$.
    Then, $f$ satisfies inheritance and rationality if and only if $f = \fnash$.
\end{corollary}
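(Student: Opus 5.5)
This statement should follow directly from \Cref{thm:symmetric}, since a symmetric zero-sum game is the two-player game $(A,-A)$ with $-A = A^t$, that is, $(A,A^t)$. The plan is to check that every notion in the statement matches the corresponding notion in \Cref{thm:symmetric} under this identification, and then to invoke that theorem.

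First, I would check the symmetric subgames. For nonempty $S\subseteq[m]$, the zero-sum subgame $A_{S,S}$ is the two-player subgame $(A_{S,S},-A_{S,S}) = (A,A^t)_{S,S}$, because $-A_{S,S} = (A^t)_{S,S}$ by skew-symmetry. The Nash equilibria agree by definition, since the column player's payoff matrix is $B=-A=A^t$ in both readings. So the hypothesis that every symmetric subgame has a unique Nash equilibrium is exactly the hypothesis of \Cref{thm:symmetric}, and the two definitions of $\fnash$ coincide.

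Second, I would check that rationality and inheritance mean the same thing in both readings. Inheritance refers only to the sets returned by $f$ on symmetric subgames, so it is identical. For rationality, the zero-sum definition says $q$ is dominated if some $\tilde q$ satisfies $A_{S,S}\tilde q \ll A_{S,S}q$. With $B=-A$ this is equivalent to $B_{S,S}q \ll B_{S,S}\tilde q$, which is the general definition. Row dominance is literally the same in both readings. Hence a symmetric solution concept satisfies inheritance and rationality in the zero-sum sense if and only if it does so for $(A,A^t)$, and \Cref{thm:symmetric} gives $f=\fnash$ in both directions.

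The argument is essentially bookkeeping, so I expect no step to be a real obstacle. The only point needing care is the sign convention in column dominance, which I handle by negating the inequality as above. I would also note, though the proof does not need it, that the odd-support property used inside \Cref{thm:symmetric} has an elementary proof for skew-symmetric $A$, given in the footnote to \Cref{lem:odd-support}. That proof avoids index theory and is consistent with the remark that no further assumption on the supports is required.
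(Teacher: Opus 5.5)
Your proposal is correct and follows the paper's proof: skew-symmetry gives $B=-A=A^t$, so the game is the symmetric two-player game $(A,A^t)$ and the statement is a special case of \Cref{thm:symmetric}. Your explicit check that zero-sum column dominance ($A_{S,S}\tilde q \ll A_{S,S}q$) matches the general definition ($B_{S,S}q \ll B_{S,S}\tilde q$) is bookkeeping that the paper leaves implicit.
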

\begin{proof}
    Since $A$ is skew-symmetric, the game is symmetric as a two-player game, so the statement is a special case of \Cref{thm:symmetric}.
\end{proof}

By a theorem of \citet{LLL97a}, every symmetric zero-sum game whose off-diagonal entries are odd integers has a unique Nash equilibrium.
Since this property is inherited by symmetric subgames, all such games satisfy the assumption of \Cref{cor:symzero}.
The most prominent examples are \emph{tournament games}, whose off-diagonal entries are $1$ or $-1$, and \Cref{cor:symzero} then has an interesting consequence for tournament solutions.
A tournament is a complete and asymmetric relation $\succ$ on a finite set $X$ of alternatives; its tournament game is the symmetric zero-sum game $A\in\mathbb R^{X\times X}$ with $A_{ab} = 1$ if $a\succ b$, $A_{ab} = -1$ if $b\succ a$, and $A_{aa} = 0$.
The support of the unique Nash equilibrium $(p^*,p^*)$ of a tournament game is known as the bipartisan set $\bp(X)$ \citep{LLL93a}.
A \emph{tournament solution} $F$ maps every tournament to a nonempty subset of its alternatives; for $S\subseteq X$, we write $F(S)$ for the choice of $F$ from the restriction of $\succ$ to $S$.
$F$ refines $\bp$ if $F(S)\subseteq\bp(S)$ for every tournament on every set $S$, and $F$ satisfies $\widehat\alpha_\supseteq$, the inheritance part of the strong superset property $\widehat\alpha$, if $F(X)\subseteq S\subseteq X$ implies $F(X)\subseteq F(S)$.
\citet[Corollary~2]{BBSS14a} have shown that no strict refinement of $\bp$ satisfies stability, i.e., $\widehat\alpha$ and $\widehat\gamma$.
This impossibility already holds when only $\widehat\alpha_\supseteq$ is required.

\begin{corollary}\label{cor:tournaments}
    No strict refinement of $\bp$ satisfies $\widehat\alpha_\supseteq$.
\end{corollary}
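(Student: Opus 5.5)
Suppose, for contradiction, that $F$ is a refinement of $\bp$ that satisfies $\widehat\alpha_\supseteq$. We will show that $F = \bp$, so $F$ cannot be a strict refinement. Fix a tournament $\succ$ on a finite set $X$ and its tournament game $A$, which is skew-symmetric. By \citet{LLL97a}, every symmetric subgame $A_{S,S}$ has a unique Nash equilibrium, so \Cref{cor:symzero} applies to $A$. Define the symmetric solution concept $f$ on the symmetric subgames by $f(A_{S,S}) = (F(S),F(S))$ for every nonempty $S\subseteq X$. This is well defined because $F(S)$ is a nonempty subset of $S$. The goal is to check that $f$ satisfies inheritance and rationality; \Cref{cor:symzero} then gives $f=\fnash$, that is, $F(S)=\bp(S)$ for all $S$, and in particular $F(X)=\bp(X)$.

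Inheritance for symmetric subgames reads: $(F(S),F(S))\subseteq(S',S')$ with $S'\subseteq S$ implies $F(S)\subseteq F(S')$. This is exactly $\widehat\alpha_\supseteq$ applied to the tournament restricted to $S$. This is legitimate because $\widehat\alpha_\supseteq$ is assumed for every tournament, including the restriction of $\succ$ to $S$, and restrictions of restrictions are restrictions.

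Rationality is the main step, and it is where the refinement hypothesis enters. Let $F(S) = S'\subseteq\bp(S)=\supp(\sigma_S)$. The unique equilibrium $\sigma_S$ of $A_{S,S}$ makes every action in $\supp(\sigma_S)$ a best response to the belief $\sigma_S$. In particular, every action of $S'$ is a best response to the common belief $\sigma_S$ in $A_{S,S}$, for both players, since the game is symmetric. By \Cref{lem:support-beliefs}, $S'$ is then the support of an undominated strategy for each player in $A_{S,S}$. So $f$ satisfies rationality in the existential formulation.

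The point requiring most care is that \Cref{cor:symzero} is stated for a fixed game and its symmetric subgames, whereas $F$ is defined on all tournaments. Restricting $F$ to the subsets of the fixed $X$ therefore loses nothing. Applying the argument to every tournament yields $F=\bp$ everywhere, which contradicts $F$ being a strict refinement.
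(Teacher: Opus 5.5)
Your proof is correct and follows the paper's argument step for step. You define $f(A_{S,S}) = (F(S),F(S))$, identify $\widehat\alpha_\supseteq$ with inheritance, obtain rationality from $F(S)\subseteq\bp(S)$ and \Cref{lem:support-beliefs} with the common belief $\sigma_S$, and conclude $F=\bp$ from \Cref{cor:symzero}. The only difference is cosmetic: the paper invokes quasi-strictness to see that the actions in $\bp(S)$ are best responses to $\sigma_S$, whereas you rightly note that this inclusion already holds for any Nash equilibrium.
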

\begin{proof}
    Let $F$ be a tournament solution that refines $\bp$ and satisfies $\widehat\alpha_\supseteq$, and consider the tournament game $A$ of an arbitrary tournament on a finite set $X$.
    Then, $f$ with $f(A_{S,S}) = (F(S),F(S))$ for all nonempty $S\subseteq X$ is a symmetric solution concept for the symmetric subgames of $A$, and $\widehat\alpha_\supseteq$ is precisely inheritance for $f$.
    Moreover, $f$ satisfies rationality: because the unique Nash equilibrium $(p_S,p_S)$ of $A_{S,S}$ is quasi-strict, every action in $\bp(S)$---and hence every action in $F(S)\subseteq \bp(S)$---is a best response to the belief $p_S$ in $A_{S,S}$, so $F(S)$ is the support of an undominated strategy by \Cref{lem:support-beliefs}.
    Hence, \Cref{cor:symzero} implies $f = \fnash$, and, in particular, $F(X) = \bp(X)$.
    Since the tournament was arbitrary, $F = \bp$.
\end{proof}

\section*{Acknowledgments}
This material is based on work supported by the Deutsche Forschungsgemeinschaft under grants {BR~2312/14-1} and {EXC-2047}. This project was inspired by unpublished work on impartial ordinalism and the essential set by Klaus Nehring.%

\end{document}